\newcommand{\cA}{\mathcal{A}}
\newcommand{\cB}{\mathcal{B}}
\newcommand{\cE}{\mathcal{E}}
\newcommand{\cH}{\mathcal{H}}
\newcommand{\cL}{\mathcal{L}}
\newcommand{\cM}{\mathcal{M}}
\newcommand{\cQ}{\mathcal{Q}}
\newcommand{\cZ}{\mathcal{Z}}
\newcommand{\Id}{\mathbb{I}}
\newcommand{\tr}{\text{tr}}
\newtheorem{theorem}{Theorem}
\newtheorem{lemma}[theorem]{Lemma}
\newtheorem{corollary}[theorem]{Corollary}
\begin{document}

\title{Measures of disturbance and incompatibility for quantum measurements}
\author{Prabha Mandayam}
\affiliation{The Institute of Mathematical Sciences, Taramani, Chennai   - 600113, India. }
\author{M. D. Srinivas}
\affiliation{Centre for Policy Studies, Mylapore, Chennai - 600004, India.}
\date{\today}

\begin{abstract}
We propose a class of incompatibility measures for quantum observables based on quantifying the effect of a measurement of one observable on the statistics of the outcomes of another. Specifically, for a pair of observables $A$ and $B$ with purely discrete spectra, we compare the following two probability distributions: one resulting from a measurement of $A$ followed by a measurement of $B$ on a given state, and the other obtained from a measurement of $B$ alone on the same state. We show that maximizing the distance between these two distributions over all states yields a valid measure of the incompatibility of observables $A$ and $B$, which is zero if and only if they commute and is strictly greater than zero (and less than or equal to one) otherwise.

For finite dimensional systems, we obtain a tight upper bound on the incompatibility of any pair of observables and show that the bound is attained when the observables are totally non-degenerate and associated with mutually unbiased bases. In the process, we also establish an important connection between the incompatibility of a pair of observables and the maximal disturbances due to their measurements. Finally, we indicate how these measures of incompatibility and disturbance can be extended to the more general class of non-projective measurements. In particular, we obtain a non-trivial upper bound on the incompatibility of one L\"uders instrument with another.
\end{abstract}

\maketitle

A central feature of quantum theory that lies at the heart of several quantum information processing and cryptographic tasks is the existence of incompatible observables. In quantum theory, compatible observables correspond to a set of commuting self-adjoint operators. Since their eigen-projectors also commute, there exists a joint probability distribution associated with such a set of observables. For non-commuting observables, however, there does not exist a joint probability distribution (which is affine in the density operator) whose marginals give the distributions for the individual observables. Quantifying this incompatibility of a set of non-commuting observables is an intriguing question with consequences for both quantum foundations and quantum information theory.

Heisenberg's celebrated uncertainty relation provided the first quantitative statement on the incompatibility of a pair of conjugate observables~\cite{heisenberg}, by providing a bound on the product of their variances. More recently, state-independent bounds on the sum of uncertainties of general sets of observables have been obtained via entropic uncertainty relations (EURs)~\cite{deutsch, Maassen-Uffink, Wehner-Winter}, and these are often considered to be providing a measure of incompatibility. However, EURs give rise to trivial bounds when the observables share even a single common eigenvector. While the lower bound on the sum of uncertainties does indeed capture the incompatibility of a large class of observables, this lower bound cannot be thought of as a {\it measure} of incompatibility in general, since it vanishes when the set of non-commuting observables share even a single common eigenvector. Rather, as pointed out recently~\cite{incompatibility_BM},
uncertainty relations must be viewed as merely a {\it consequence} of the incompatibility of observables.

A new, operational approach to quantifying incompatibility was proposed in~\cite{incompatibility_BM}, based on the notion of {\it accessible fidelity}~\cite{Fuchs-Sasaki}. The measure $\cQ$ defined in~\cite{incompatibility_BM} captures the incompatibility of a set of (totally non-degenerate) non-commuting observables as manifest in the non-orthogonality of their eigenstates.

In this paper we consider a different operational setting for defining incompatibility, which is closer in spirit to the original formulation due to Heisenberg and others of the uncertainty relation. We introduce a class of incompatibility measures which are based on estimating the change due to a measurement of one observable on the statistics of the outcomes of another. If a pair of observables $A$ and $B$ commute, then, a measurement of $B$, which follows a measurement of $A$, yields the same measurement statistics as a measurement of $B$ alone, on all states. However, if $A$ and $B$ do not commute, $A$ and $B$ are not jointly measurable, and, there exist states on which a measurement of $A$ {\it disturbs} the system in such a way that a subsequent measurement of $B$ yields probabilities very different from those associated with a measurement of $B$ alone. The {\it distance} between these two probability distributions -- one resulting from a measurement of $B$ following a measurement of $A$ and the
other resulting from a measurement of $B$ alone -- can thus be viewed as a measure of the effect of a measurement of $A$ on the outcomes of a measurement of $B$, for each given state. Maximizing this over all the states of the system gives a measure of incompatibility that is naturally state-independent.

We are thus lead to an entire class of measures of incompatibility, which are obtained by choosing different measures of distance between probability distributions. Each of these incompatibility measures is zero for a pair of observables if and only if the observables commute. Furthermore, this class of measures always yields a strictly positive value for the incompatibility even when the observables in question share some common eigenstates but do not commute over the entire space -- unlike uncertainty relations, which invariably yield a zero lower bound in such cases.

The paper is organized as follows. In Section~\ref{sec:dist_measure} we review the relevant distance measures between probability distributions and use them to define a class of incompatibility measures between observables. In Section~\ref{sec:incompat_disturbance} we prove that the incompatibility of observable $A$ with $B$ is bounded by the maximal disturbance due to a measurement of $A$. In Section~\ref{sec:fidelity_bounds} we focus on the finite dimensional case and prove a tight upper bound for the fidelity-based incompatibility measure. We also show that the upper bound is attained for mutually unbiased observables. Finally, in Section~\ref{sec:incompat_gen}, we show how the measures introduced here can be extended for the case of general quantum measurements, beyond the class of projective measurements.

\section{A new class of incompatibility measures}\label{sec:dist_measure}
\subsection{Distance measures for classical probability distributions and quantum states}
Given a pair of discrete probability distributions $P \sim \{p_{i}\}$ and $Q \sim \{q_{j}\}$, we consider the following three measures of distance between $P$ and $Q$~\cite{Cha, Mathai_Rathie}:
\begin{itemize}
\item[(i)] Variational Distance or $L_{1}$-Distance :
\begin{equation}
 D_{1}(P,Q) \equiv \frac{1}{2}\sum_{i}\vert p_{i} - q_{i}\vert \nonumber 
\end{equation}
\item[(ii)] Fidelity-based Distance :
\begin{equation}
  D_{F}(P,Q) \equiv 1 - (F(P,Q))^{2}, \nonumber
\end{equation}
where the Fidelity $F(P,Q)$ (also known as the Bhattacharyya Distance) is defined as
\begin{equation}
 F(P,Q) \equiv \sum_{i}\sqrt{p_{i}}\sqrt{q_{i}} \nonumber 
\end{equation}
\item[(iii)] Chebyshev Distance or $L_{\infty}$-Distance :
\begin{equation}
 D_{\infty}(P,Q) \equiv \max_{i}\vert p_{i} -  q_{i}\vert \nonumber 
\end{equation}
\end{itemize}

All three distance measures satisfy:
\begin{equation}
0\leq D_{\alpha}(P,Q) \leq 1, \; (\alpha \in \{1,F,\infty\})\nonumber
\end{equation}
with $D_{\alpha}(P,Q) = 0$ if and only if $P$ and $Q$ are identical. Furthermore, $D_{1}(P,Q)$ and $D_{\infty}(P,Q)$ are metrics on the space of probability distributions, that is, they are symmetric \[ D_{\alpha}(P,Q) = D_{\alpha}(Q,P), \; (\alpha=1, \infty)\]
and, for three probability distributions $P, Q, S$, the triangle inequality holds:
\[ D_{\alpha}(P,Q) \leq D_{\alpha}(P,S) + D_{\alpha}(S,Q), \; (\alpha=1, \infty)  .  \]
The fidelity-based measure stands apart from the distance-based measures in the following sense. While $D_{F}(P,Q)$ is a symmetric measure ($D_{F}(P,Q) = D_{F}(Q,P)$), it does not satisfy the triangle inequality and is therefore not a metric.

Finally, we note that the variational distance and fidelity are related as follows:
\begin{eqnarray}
 1 - F(P,Q) \leq D_{1}(P,Q) &\leq& \sqrt{1-(F(P,Q))^{2}} \nonumber \\
&=& \sqrt{D_{F}(P,Q)} . \nonumber
\end{eqnarray}

In our discussions below, we also make use of two well-known distance measures between quantum states that are obtained as generalizations of the classical variational distance and fidelity measures~\cite{NCbook}. The trace-distance between quantum states $\rho$ and $\sigma$ is defined as
\[ D_{1}(\rho, \sigma) \equiv \frac{1}{2}\tr\vert \rho -\sigma \vert,\]
where, $|\rho| = \sqrt{\rho^{\dagger}\rho}$ is the positive square-root of $\rho^{\dagger} \rho$. The fidelity of states $\rho$ and $\sigma$ is defined to be
\[ F(\rho, \sigma) \equiv \tr\sqrt{\rho^{1/2}\sigma\rho^{1/2}}. \]
Similar to their classical analogues, the quantum trace-distance and fidelity are also related as
\begin{equation}
1 - F(\rho,\sigma) \leq D_{1}(\rho, \sigma) \leq \sqrt{1-(F(\rho,\sigma))^{2}}. \nonumber
\end{equation}

\subsection{Distance-based incompatibility measures}\label{sec:incompat_meas}

In the first three sections of this paper we will work within the framework of standard quantum theory, and restrict our attention to observables $A$ and $B$ which are self-adjoint operators with purely discrete spectra, and spectral decompositions $A = \sum_{i}a_{i}P^{A}_{i}$ and $B = \sum_{j} b_{j}P^{B}_{j}$. Let ${\rm Pr}^{B}_{\rho} \sim \{p_{\rho}^{B}(j)\}$ denote the probability distribution over the outcomes of a measurement of observable $B$ in state $\rho$. Let ${\rm Pr}^{A\rightarrow B}_{\rho} \sim \{q^{A\rightarrow B}_{\rho}(j)\}$ denote the probability distribution over the outcomes of a $B$ measurement when it follows a measurement of $A$ on the same state $\rho$. These probabilities are given by
\begin{eqnarray}
{\rm Pr}^{B}_{\rho}: p^{B}_{\rho}(j) &=& \tr\left[P^{B}_{j}\rho\right],    \label{eq:prob_defn}\\
 {\rm Pr}^{A\rightarrow B}_{\rho} : q^{A\rightarrow B}_{\rho}(j) &=& \tr\left[P^{B}_{j}\left(\sum_{i}P^{A}_{i}\rho P^{A}_{i}\right)\right]. \nonumber
\end{eqnarray}

If $A$ and $B$ commute, their corresponding eigen-projectors commute, and hence a measurement of $A$ does not affect the distribution of the outcomes of a subsequent measurement of $B$ on the same state. The two probability distributions defined in Eq.~\eqref{eq:prob_defn} above are identical for all states $\rho$ in this case. For a general pair of observables $A$ and $B$, the distance between the probability distributions ${\rm Pr}^{A\rightarrow B}_{\rho}$ and  ${\rm Pr}^{B}_{\rho}$ can thus be regarded as a measure of how much an intervening measurement of $A$ disturbs the statistics of the outcomes of a subsequent measurement of $B$ on the same state $\rho$. Maximizing this distance between probability distributions over all states in the system gives a measure of the mutual incompatibility of the observable $A$ with $B$.

Corresponding to the three distance measures discussed above, we can thus define the following three {\it measures of incompatibility} of observable {\it $A$ with $B$}:
\begin{itemize}
 \item[(i)] $L_{1}$-distance based incompatibility measure:
\begin{equation}
\cQ_{1}(A\rightarrow B) = \sup_{\rho} D_{1}\left({\rm Pr}^{A\rightarrow B}_{\rho},{\rm Pr}^{B}_{\rho} \right). \nonumber
\end{equation}
\item[(ii)] Fidelity-based incompatibility measure:
\begin{equation}
 \cQ_{F}(A\rightarrow B) = \sup_{\rho}\left[1  - F^{2}({\rm Pr}^{A\rightarrow B}_{\rho},{\rm Pr}^{B}_{\rho})\right]. \nonumber
\end{equation}
\item[(iii)] $L_{\infty}$-distance based incompatibility measure:
\begin{equation}
 \cQ_{\infty}(A\rightarrow B) = \sup_{\rho} D_{\infty}({\rm Pr}^{A\rightarrow B}_{\rho},{\rm Pr}^{B}_{\rho}). \nonumber
\end{equation}
\end{itemize}

All three incompatibility measures defined above satisfy,
\begin{equation}
 0 \leq \cQ_{\alpha}(A\rightarrow B) \leq 1, \;\;\; \alpha \in \{1, F, \infty\}
\end{equation}
Furthermore, the lower bound is attained if and only if the observables $A$ and $B$ commute. We state and prove this important property in the following Lemma.
\begin{lemma}
For a pair observables $A$ and $B$ with purely discrete spectra, $Q_{\alpha}(A\rightarrow B) = 0 \; \; (\alpha \in \{1,F,\infty\})$, if and only if $A$ and $B$ commute.
\end{lemma}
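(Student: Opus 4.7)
The plan is to prove each direction separately, exploiting the fact (already stated in the excerpt) that for all three choices $\alpha \in \{1, F, \infty\}$ the distance $D_\alpha(P,Q)$ vanishes if and only if $P = Q$. This reduces the problem to an equivalence between commutativity of $A$ and $B$ and pointwise equality of the two distributions $\Pr^{A\to B}_\rho$ and $\Pr^B_\rho$ for every state $\rho$.

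For the \emph{if} direction, suppose $[A,B]=0$. Then by simultaneous diagonalizability the spectral projectors satisfy $[P^A_i, P^B_j] = 0$ for all $i,j$. Using $\sum_i P^A_i = \Id$ and $P^A_i P^A_k = \delta_{ik} P^A_i$, I would compute
\begin{equation}
q^{A\to B}_\rho(j) = \tr\Bigl[P^B_j \sum_i P^A_i \rho P^A_i\Bigr] = \tr\Bigl[\sum_i P^A_i P^B_j P^A_i\, \rho\Bigr] = \tr[P^B_j \rho] = p^B_\rho(j), \nonumber
\end{equation}
so the two distributions coincide for every $\rho$ and every $j$, giving $\cQ_\alpha(A\to B)=0$.

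For the \emph{only if} direction, suppose $\cQ_\alpha(A\to B)=0$. Since the supremum of a nonnegative quantity over all $\rho$ vanishes, $D_\alpha(\Pr^{A\to B}_\rho, \Pr^B_\rho) = 0$ for every $\rho$; by the characterization of zero distance, $p^B_\rho(j) = q^{A\to B}_\rho(j)$ for every $j$ and every $\rho$. Rewriting this as $\tr\bigl[\bigl(P^B_j - \sum_i P^A_i P^B_j P^A_i\bigr)\rho\bigr] = 0$ for all density operators $\rho$, and using that the density operators span the space of self-adjoint operators, I conclude the key operator identity
\begin{equation}
P^B_j = \sum_i P^A_i P^B_j P^A_i \quad \text{for every } j. \nonumber
\end{equation}

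The main (small) obstacle is to extract commutativity from this fixed-point identity. I would do this by multiplying on the right by $P^A_k$ and using orthogonality of the $P^A_i$: the right-hand side collapses to $P^A_k P^B_j P^A_k$, so $P^B_j P^A_k = P^A_k P^B_j P^A_k$. Taking adjoints of both sides gives $P^A_k P^B_j = P^A_k P^B_j P^A_k$ as well, hence $P^A_k P^B_j = P^B_j P^A_k$ for every $j,k$. Commutativity of all spectral projectors immediately implies $[A,B] = \sum_{i,j} a_i b_j [P^A_i, P^B_j] = 0$, completing the proof.
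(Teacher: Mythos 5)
Your proof is correct, and it is genuinely more self-contained than the paper's. The paper reduces the lemma to the statement that $\Pr^{A\rightarrow B}_{\rho}$ and $\Pr^{B}_{\rho}$ coincide for all $\rho$ if and only if $A$ and $B$ commute, and then simply cites the literature (Davies; Busch--Singh) for that equivalence. You prove the same equivalence from scratch: the forward direction by the direct computation $\sum_i P^A_i P^B_j P^A_i = P^B_j$ under commutativity of the spectral projectors, and the converse by first extracting the operator fixed-point identity $P^B_j = \sum_i P^A_i P^B_j P^A_i$ (legitimately, since $\langle\psi|X|\psi\rangle = 0$ for all $\psi$ forces a self-adjoint $X$ to vanish) and then recovering $[P^A_k, P^B_j]=0$ by multiplying by $P^A_k$ and taking adjoints --- the standard argument that fixed points of a pinching commute with its projectors. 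What the paper's route buys is brevity; what yours buys is a reference-free proof. Two small remarks: (i) your use of $D_F(P,Q)=0 \Leftrightarrow P=Q$ is the correct characterization ($F=1$ iff $P=Q$ by Cauchy--Schwarz); the paper's own text contains a slip here, asserting $F=0$ iff the distributions are identical. (ii) In infinite dimensions with unbounded $A,B$, the final recombination $[A,B]=\sum_{i,j}a_i b_j[P^A_i,P^B_j]$ is formal; but commutativity of such observables is standardly \emph{defined} by commutativity of their spectral projectors, which is exactly what you have established, so nothing is lost.
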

\begin{proof}
Recall that the distance $D_{\alpha}\left({\rm Pr}^{A\rightarrow B}_{\rho},{\rm Pr}^{B}_{\rho} \right) = 0$ if and only if the probability distributions ${\rm Pr}^{A\rightarrow B}_{\rho}$ and ${\rm Pr}^{B}_{\rho}$ are identical. It is well known that these two probability distributions coincide for all states $\rho$ if and only if the observables $A$ and $B$ commute~\cite{Davies_book, busch_singh98}. This 
proves the Lemma for $\cQ_{1}$ and $\cQ_{\infty}$.

Similarly, recalling that the fidelity $F\left({\rm Pr}^{A\rightarrow B}_{\rho},{\rm Pr}^{B}_{\rho} \right) = 0$ if and only if the distributions ${\rm Pr}^{A\rightarrow B}_{\rho}$ and ${\rm Pr}^{B}_{\rho}$ are identical, the Lemma is proved for $\cQ_{F}$ as well.
\end{proof}

The incompatibility measures defined here are not symmetric in general. We show in Appendix~\ref{sec:asymmetry}, via an explicit example, that there do exist observables $A, B$, such that,
\[ \cQ_{\alpha}(A\rightarrow B) \neq \cQ_{\alpha}(B \rightarrow A). \]
It is thus natural to define the incompatibility $\cQ_{\alpha}(A,B)$ of the pair of observables $A,B$, as some kind of average of the incompatibilities $\cQ_{\alpha}(A\rightarrow B)$ and $\cQ_{\alpha}(B\rightarrow A)$. This ensures that $\cQ_{\alpha}(A,B)$ is large when {\it both} $\cQ_{\alpha}(A\rightarrow B)$ and $\cQ_{\alpha}(B\rightarrow A)$ are large, and vice-versa.

We therefore propose to define the incompatibility of a set of $N$ observables $\{A_{1}, A_{2}, \ldots, A_{N}\}$ in terms of the pairwise incompatibilities $\{\cQ_{\alpha}(A_{i}\rightarrow A_{j})\}$, in the following manner:
\begin{equation}
\cQ_{\alpha}(A_{1}, A_{2}, \dots, A_{N}) \equiv \frac{1}{N^{2}}\sum_{i,j}\cQ_{\alpha}(A_{i}\rightarrow A_{j}), \label{eq:sym_measureN}
\end{equation}
where $\cQ_{\alpha}(A_{i}\rightarrow A_{i}) = 0$. In particular, the incompatibility of a pair of observables $A$ and $B$, is thus defined as,
\begin{equation}
 \cQ_{\alpha}(A,B) \equiv \frac{\cQ_{\alpha}(A\rightarrow B) + \cQ_{\alpha}(B \rightarrow A)}{4}. \label{eq:sym_measure}
\end{equation}

Incidentally, we may note that an $L_{\infty}$-distance based measure has been used in~\cite{busch_heinosaari08} to characterize approximate joint measurability of general quantum observables. 

Furthermore, there has been renewed interest in the issue of quantifying measurement-induced changes in probabilities, specifically in the context of the so-called Heisenberg error-disturbance relations~\cite{ozawa, diLorenzo, BLW13, BLW14, BLW_jmp} for position and momentum observables. For successive approximate measurements of position and momentum on the same system, these error-disturbance relations seek to provide a trade-off between the ``error", or precision of the position measurement, and ``disturbance", or change in the statistics of a subsequent measurement due to the intervening position measurement. To quantify this disturbance several approaches have been considered, for example, the rms distance between the original momentum operator and the ``disturbed" measurement operator~\cite{ozawa}, or, the difference between the standard deviation of the momentum operator in the original system state and the modified state after an intervening position momentum~\cite{diLorenzo}. A more interesting approach is that of Busch {\it et al.}~\cite{BLW13, BLW14, BLW_jmp}, who use the Wasserstein-2 distance between the probability distribution of the momentum outcomes after the position measurement and the probability distribution of the outcomes of an ideal momentum measurement.

To place our work in the context of these recent discussions, a few remarks are in order. We are seeking to quantify the notion of incompatibility between any pair of observables $A, B$ with purely discrete spectrum, for which a canonically well-defined collapse or change in the state of the system due to measurement exists. We propose to quantify the incompatibility of observable $A$ with another observable $B$ in terms of the change in the statistics of the outcomes of $B$ due to an earlier measurement of $A$. This change in statistics is best measured in terms of the distance between the corresponding probability distributions. Accordingly, in this paper we have considered three well-known measures of distance between probability distributions to give quantitative measures of incompatibility of any pair of observables.

\section{Incompatibility and Disturbance}\label{sec:incompat_disturbance}

With any observable $A$ having a purely discrete spectrum, there is associated a {\it measurement channel} $\cE^{A}$. $\cE^{A}$ is a completely positive trace-preserving (CPTP) map that describes the post-measurement transformation of state $\rho$ after a measurement of $A$, as follows:
\[ \cE^{A}(\rho) = \sum_{i}P^{A}_{i}\rho P^{A}_{i}.\]
Both the trace-distance $D_{1}\left( \cE^{A}(\rho) , \rho\right)$ and the fidelity $F(\cE^{A}(\rho), \rho)$ are valid measures of the {\it disturbance} caused to state $\rho$ by a measurement of $A$~\cite{NCbook}. The {\it maximal disturbance} due to the measurement $A$ can therefore be estimated by either of the following measures:
\begin{eqnarray}
D_{1}^{({\rm max})} (A) &\equiv& \sup_{\rho}\frac{1}{2}\tr \left\vert \cE^{A}(\rho) - \rho \right\vert \nonumber \\
D_{F}^{({\rm max})} (A) &\equiv& 1 - [F^{({\rm min})}(A)]^{2} \nonumber \\
&=& 1 - [\inf_{\rho} F(\cE^{A}(\rho), \rho)]^{2} . \label{eq:disturbance}
\end{eqnarray}
Both these measures of disturbance satisfy,
\begin{equation}
0 \leq D_{1}^{({\rm max})} (A) \leq 1 \; , \; 0 \leq D_{F}^{({\rm max})} (A) \leq 1 \nonumber
\end{equation}

Here we prove an important property of our class of incompatibility measures, namely that the incompatibility of $A$ with $B$ is always upper bounded by the maximal disturbance due to observable $A$.

\begin{lemma}\label{lem:incompat_disturbance}
For a pair of observables $A$ and $B$ with purely discrete spectra, the mutual incompatibility $\cQ_{\alpha}(A\rightarrow B) \;\; (\alpha \in \{1,F,\infty\})$ is bounded above by the maximal disturbance due to the observable $A$. Specifically,
\begin{eqnarray}
 \cQ_{\alpha}(A\rightarrow B) &\leq& D_{1}^{({\rm max)}}(A), \quad \alpha = 1,\infty. \nonumber \\
\cQ_{F}(A\rightarrow B) &\leq& D_{F}^{({\rm max})} (A) =  1  - [F^{({\rm min})} (A)]^{2}
\end{eqnarray}
\end{lemma}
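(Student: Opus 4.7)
The plan is to observe that ${\rm Pr}^{A\rightarrow B}_\rho$ and ${\rm Pr}^{B}_\rho$ are nothing but the outcome distributions obtained by measuring the same observable $B$ on two different quantum states---namely $\cE^{A}(\rho)$ and $\rho$. This reframing lets me invoke the standard monotonicity of the trace distance and of the fidelity under CPTP maps, applied here to the quantum-to-classical measurement channel $\rho \mapsto \sum_j \tr(P^{B}_j\rho)\,|j\rangle\langle j|$. The two key inequalities are
\begin{align}
D_1\bigl({\rm Pr}^{A\rightarrow B}_\rho,\,{\rm Pr}^{B}_\rho\bigr) &\leq D_1\bigl(\cE^{A}(\rho),\rho\bigr),\\
F\bigl({\rm Pr}^{A\rightarrow B}_\rho,\,{\rm Pr}^{B}_\rho\bigr) &\geq F\bigl(\cE^{A}(\rho),\rho\bigr).
\end{align}

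Taking the supremum over $\rho$ in the first inequality yields $\cQ_1(A\rightarrow B)\leq D_1^{(\max)}(A)$. For the fidelity case, squaring and subtracting from $1$ reverses the direction, and then taking the supremum over $\rho$ (equivalently, infimum of the fidelity on the right-hand side) gives $\cQ_F(A\rightarrow B)\leq 1-[F^{(\min)}(A)]^2 = D_F^{(\max)}(A)$. For the $L_\infty$ case I would first note the elementary classical inequality $D_\infty(P,Q)\leq D_1(P,Q)$---which follows because $\sum_i(p_i-q_i)=0$ forces the sum of positive excesses to dominate any single $|p_i-q_i|$---and then chain this with the $L_1$ bound already established to obtain $\cQ_\infty(A\rightarrow B)\leq D_1^{(\max)}(A)$.

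The only step requiring any real care is the specialisation of the CPTP monotonicity theorems to the specific measurement channel above: one must verify that the trace distance and fidelity of the resulting block-diagonal classical states reduce precisely to the classical $D_1$ and $F$ of the outcome distributions, but this is routine bookkeeping (indeed the classical quantities are recovered by restricting the Helstrom-type variational formula for trace distance to the POVM elements $\{P^{B}_j\}$). With these ingredients in hand no substantive obstacle remains, and the lemma follows in a few lines.
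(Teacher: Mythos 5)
Your proof is correct and follows essentially the same route as the paper: both rest on the two key inequalities $D_{1}({\rm Pr}^{A\rightarrow B}_{\rho}, {\rm Pr}^{B}_{\rho})\le D_{1}(\cE^{A}(\rho),\rho)$ and $F({\rm Pr}^{A\rightarrow B}_{\rho},{\rm Pr}^{B}_{\rho})\ge F(\cE^{A}(\rho),\rho)$, which the paper obtains from the variational (max/min over POVMs) characterizations of trace distance and fidelity rather than from CPTP monotonicity of the measurement channel --- the same fact in different clothing. The only genuine divergence is the $L_{\infty}$ case, where you chain the elementary classical bound $D_{\infty}(P,Q)\le D_{1}(P,Q)$ with the already-established $L_{1}$ result, while the paper instead applies the projector variational formula $D_{1}(\rho,\sigma)=\max_{P}\tr(P(\rho-\sigma))$ to each outcome $j$ separately; both arguments are valid.
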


\begin{proof}
We first prove the result for $\alpha \equiv 1, \infty$ and then for $\alpha \equiv F$.

(i) From the definition of $\cQ_{1}(A\rightarrow B)$, we see that,
\begin{eqnarray}
 && \cQ_{1}(A\rightarrow B) = \sup_{\rho}\frac{1}{2}\sum_{j}\vert q^{A\rightarrow B}_{\rho}(j) - p_{\rho}^{B}(j)\vert \nonumber \\
&=& \sup_{\rho}\frac{1}{2}\sum_{j}\left\vert\tr\left[P^{B}_{j}\left(\sum P^{A}_{i}\rho P^{A}_{i}\right)\right] -  \tr[P^{B}_{j}\rho]\right\vert  \nonumber \\
&\leq& \sup_{\rho}\frac{1}{2}\tr\left\vert\sum_{i}P^{A}_{i}\rho P^{A}_{i} - \rho\right\vert \nonumber \\
&=& \sup_{\rho}\frac{1}{2}\tr\left\vert \cE^{A}(\rho) - \rho\right\vert = D_{1}^{({\rm max})}(A) .
\end{eqnarray}
The inequality above follows from the fact that the quantum trace distance between two states is an upper bound on the classical distance between probability distributions obtained by performing measurements on those quantum states. That is,
\begin{equation}
 D_{1}(\rho, \sigma) = \max_{\cM\sim\{M_{i}\}} D_{1}({\rm Pr}^{\cM}_{\rho}, {\rm Pr}^{\cM}_{\sigma}), \label{eq:quant_class}
\end{equation}
where the maximization is over all positive-operator valued measures (POVMs) $\cM\sim\{M_{i}\}$. ${\rm Pr}^{\cM}(\rho) \sim \{\tr[M_{i}\rho]\}$ and ${\rm Pr}^{\cM}(\sigma) \sim \{\tr[M_{i}\sigma]\}$ are the probability distributions arising from the POVM measurement $\cM$ on the states $\rho$ and $\sigma$.

Note that the quantum trace distance also satisfies
\begin{equation}
 D_{1}(\rho, \sigma) = \max_{P}\tr(P(\rho -\sigma)), \label{eq:quant_classD1}
\end{equation}
where the maximization is taken over all projectors $P$. Using this relation we can easily see that,
\begin{eqnarray}
&& \cQ_{\infty}(A\rightarrow B) = \sup_{\rho}\max_{j}\vert q_{j}^{A\rightarrow B}(\rho) - p_{j}^{B}(\rho)\vert \nonumber \\
&=& \sup_{\rho}\max_{j}\left\vert\tr\left[P^{B}_{j}\left(\sum_{i}P^{A}_{i}\rho P^{A}_{i}\right)\right] - \tr[P^{B}_{j}\rho] \right\vert \nonumber \\
&\leq& \sup_{\rho}D_{1}(\sum_{i}P^{A}_{i}\rho P^{A}_{i}, \rho)  = D_{1}^{({\rm max})}(A) .
\end{eqnarray}

(ii) A relation similar to Eq.~\eqref{eq:quant_class} holds for the quantum fidelity between states and the classical fidelity between probability distributions induced by measurements on the states. In particular,
\begin{equation}
 F(\rho, \sigma) = \min_{\cM} F({\rm Pr}^{\cM}_{\rho}, {\rm Pr}^{\cM}_{\sigma}). \label{eq:quant_classF}
\end{equation}
Using this, we can relate the classical fidelity between ${\rm Pr}^{A\rightarrow B}_{\rho}$ and ${\rm Pr}^{B}_{\rho}$ and the quantum fidelity between the states $\rho$ and $\cE^{A}(\rho) = \sum_{i}P^{A}_{i}\rho P^{A}_{i}$, as follows:
\begin{eqnarray}
&& F({\rm Pr}^{A\rightarrow B}_{\rho},{\rm Pr}^{B}_{\rho}) \nonumber \\
&=& \sum_{j}\sqrt{\tr\left[P^{B}_{j}\left(\sum_{i}P^{A}_{i}\rho P^{A}_{i}\right)\right]\tr[P^{B}_{j}\rho]} \nonumber \\
&\geq& F(\sum_{i}P^{A}_{i}\rho P^{A}_{i}, \rho). \nonumber
\end{eqnarray}
This implies,
\begin{eqnarray}
\cQ_{F}(A\rightarrow B) &=& \sup_{\rho}\left[1  - F^{2}({\rm Pr}^{A\rightarrow B}_{\rho},{\rm Pr}^{B}_{\rho})\right] \nonumber \\
&\leq& 1  - \inf_{\rho}F^{2}(\cE^{A}(\rho), \rho) \nonumber \\
&=& 1 - [F^{({\rm min})}(A)]^{2}. \label{eq:max_dist2}
\end{eqnarray}
\end{proof}


\section{Tight upper bounds on incompatibility in finite dimensions} \label{sec:fidelity_bounds}

In this section we focus on the fidelity-based incompatibility measure $\cQ_{F}(A,B)$. We obtain non-trivial upper bounds for observables in a finite-dimensional Hilbert space, and show that the bounds are attained when the observables are totally non-degenerate and correspond to mutually unbiased bases (MUBs).

\begin{theorem}\label{thm:Q2bound}
 For a pair of observables $A$ and $B$ in a $d$-dimensional Hilbert space $\cH_{d}$, the incompatibility of $A$ with $B$ is bounded by
\begin{equation}
  \cQ_{F}(A\rightarrow B) \leq \left(1 - \frac{1}{d} \right). \label{eq:Q2}
\end{equation}
The upper bound in Eq.~\eqref{eq:Q2}is attained when $A$ and $B$ are non-degenerate observables associated with mutually unbiased bases.
\end{theorem}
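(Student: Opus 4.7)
The plan is to combine Lemma~\ref{lem:incompat_disturbance} with a concavity argument that reduces the optimization over $\rho$ to pure states, after which a simple Cauchy--Schwarz estimate yields the bound. By Lemma~\ref{lem:incompat_disturbance}, $\cQ_{F}(A\rightarrow B) \leq 1 - [F^{({\rm min})}(A)]^{2}$, so it suffices to show $F(\cE^{A}(\rho), \rho) \geq 1/\sqrt{d}$ for every state $\rho$ on $\cH_{d}$.

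To reduce to pure states, I would invoke joint concavity of the quantum fidelity. Since $\rho \mapsto \cE^{A}(\rho)$ is linear, the map $\rho \mapsto F(\cE^{A}(\rho), \rho)$ is concave on the convex set of density operators, so its infimum is attained at an extreme point, i.e., a pure state $|\psi\rangle\langle\psi|$. A direct calculation then gives $F^{2}(\cE^{A}(|\psi\rangle\langle\psi|), |\psi\rangle\langle\psi|) = \langle\psi|\cE^{A}(|\psi\rangle\langle\psi|)|\psi\rangle = \sum_{i}\langle\psi|P^{A}_{i}|\psi\rangle^{2}$. Since the projectors $\{P^{A}_{i}\}$ form an orthogonal resolution of the identity on $\cH_{d}$, there are at most $d$ nonzero terms with sum $1$, so Cauchy--Schwarz yields $\sum_{i}\langle\psi|P^{A}_{i}|\psi\rangle^{2} \geq 1/d$. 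Combining these steps gives $\cQ_{F}(A\rightarrow B) \leq 1 - 1/d$.

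For attainment, suppose $A$ and $B$ are non-degenerate with eigenbases $\{|a_{i}\rangle\}$ and $\{|b_{j}\rangle\}$ forming MUBs, i.e., $|\langle a_{i}|b_{j}\rangle|^{2} = 1/d$ for all $i,j$. I would exhibit a saturating state by choosing $\rho = |b_{k}\rangle\langle b_{k}|$ for any fixed $k$: this gives $p^{B}_{\rho}(j) = \delta_{jk}$, and the MUB condition forces $q^{A\rightarrow B}_{\rho}(j) = \sum_{i}|\langle b_{j}|a_{i}\rangle|^{2}|\langle a_{i}|b_{k}\rangle|^{2} = 1/d$ uniformly in $j$. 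Therefore $F({\rm Pr}^{A\rightarrow B}_{\rho}, {\rm Pr}^{B}_{\rho}) = \sqrt{(1/d)(1)} = 1/\sqrt{d}$, giving $\cQ_{F}(A\rightarrow B) \geq 1 - 1/d$ and hence equality with the upper bound.

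The main obstacle is the reduction from mixed to pure states, which rests on the joint concavity of the quantum fidelity; this is a standard but non-trivial (Alberti--Uhlmann-type) result that must be cited. An alternative that bypasses it is to use joint concavity of the \emph{classical} fidelity $F(P,Q) = \sum_{i}\sqrt{p_{i}q_{i}}$ directly on $F({\rm Pr}^{A\rightarrow B}_{\rho}, {\rm Pr}^{B}_{\rho})$ (which is linear in $\rho$ in each argument), reducing to pure $\rho$ first and only then invoking Lemma~\ref{lem:incompat_disturbance} to bound the pure-state fidelity from below by $1/\sqrt{d}$.
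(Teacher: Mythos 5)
Your proposal is correct and follows essentially the same route as the paper's proof: invoke Lemma~\ref{lem:incompat_disturbance}, reduce to pure states via concavity of the fidelity, bound $\sum_{i}\langle\psi|P^{A}_{i}|\psi\rangle^{2}\geq 1/d$ (the paper phrases this as $H_{2}\leq\log d$ for the R\'enyi-2 entropy, which is the same inequality as your Cauchy--Schwarz step), and saturate with $\rho=|b_{k}\rangle\langle b_{k}|$ for a pair of MUBs. Your version is, if anything, slightly more explicit about why the infimum is attained at an extreme point and about the saturating calculation, which the paper leaves as ``a simple calculation.''
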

\begin{proof}
 We have already shown in Section~\ref{sec:incompat_disturbance} that the $\cQ_{F}(A\rightarrow B)$ measure is bounded from above by the maximal disturbance due to observable $A$. It only remains to prove an upper bound on this maximal disturbance. Concavity of the quantum fidelity implies
\begin{eqnarray}
 && \inf_{\rho}F^{2}(\sum_{i}P^{A}_{i}\rho P^{A}_{i}), \rho) \nonumber \\
&\geq& \inf_{|\psi\rangle\langle\psi|} F^{2}(\sum_{i}P^{A}_{i}|\psi\rangle\langle\psi|P^{A}_{i}, |\psi\rangle\langle\psi|) \nonumber \\
&=& \inf_{|\psi\rangle\langle\psi|} \sum_{i} (\langle\psi|P^{A}_{i}|\psi\rangle)^{2} = \inf_{|\psi\rangle\langle\psi|} 2^{ - H_{2}(A ; |\psi\rangle\langle \psi |)} \nonumber \\
&\geq& \frac{1}{d}, \label{eq:fidelity_H2}
\end{eqnarray}
where, in the final step we have used the definition of the second order R{\'e}nyi entropy $H_{2}(A;|\psi\rangle\langle\psi|)$ of the probability distribution arising from a measurement of observable $A$ on state $|\psi\rangle$, and the fact that this $H_{2}$ entropy is always bounded from above by $\log d$. Putting together Eqns.~\eqref{eq:max_dist2}~\eqref{eq:fidelity_H2} we have,
\begin{eqnarray}
 \cQ_{F}(A\rightarrow B) &=& 1 - \inf_{\rho}F^{2}\left({\rm Pr^{A\rightarrow B}_{\rho}}, {\rm Pr^{B}_{\rho}}\right) \nonumber \\
 &\leq& 1  - \inf_{\rho}F^{2}(\cE^{A}(\rho), \rho) \nonumber \\
&\leq& 1 - \inf_{|\psi\rangle\langle\psi|} 2^{ - H_{2}(A | |\psi\rangle\langle \psi |)} \nonumber \\
&\leq& 1 - \frac{1}{d}.
\end{eqnarray}

Finally, to see the bound is tight for mutually unbiased observables, recall that two non-degenerate observables $A$ and $B$ are said to be mutually unbiased iff the corresponding orthonormal eigenbases $A \sim \{|a_{i}\rangle\langle a_{i}|\}$ and $B \sim \{|b_{i}\rangle\langle b_{i}|\}$ satisfy,
\[ |\langle a_{i}| b_{j} \rangle |^{2} = \frac{1}{d}, \; \forall i,j = 1,\ldots, d.\]
Then, a simple calculation shows that the upper bound proved above is attained for an eigenstate of $B$, that is, for $\rho = |\psi\rangle\langle\psi| \equiv |b_{i}\rangle\langle b_{i}|$, for some $i=1,2,\ldots,d.$
\end{proof}

This result immediately gives an upper bound on $\cQ_{F}(A,B)$, the mutual incompatibility of $A$ and $B$.
\begin{eqnarray}
\cQ_{F}(A,B) &=& \frac{\cQ_{F}(A\rightarrow B) + \cQ_{F}(B\rightarrow A)}{4} \nonumber \\
&\leq& \frac{1}{2}\left(1-\frac{1}{d}\right),
\end{eqnarray}
where the upper bound is attained for a pair of non-degenerate observables associated with mutually unbiased bases.

A simple corollary of Theorem~\ref{thm:Q2bound} is a non-trivial upper bound on the average pairwise mutual incompatibility of more than two observables.
\begin{corollary}
The mutual incompatibility of a set of $N$ observables $\{A_{1}, A_{2}, \ldots, A_{N}\}$ in $\cH_{d}$ satisfies,
\begin{equation}
 \cQ_{F}(A_{1}, A_{2}, \ldots, A_{N}) \leq \left(1-\frac{1}{N}\right)\left(1 -\frac{1}{d}\right).
 \end{equation}
 The bound is attained when the observables are non-degenerate and associated with mutually unbiased bases.
\end{corollary}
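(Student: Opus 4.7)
The plan is to apply Theorem~\ref{thm:Q2bound} term-by-term to the definition of $\cQ_F(A_1,\dots,A_N)$ given in Eq.~\eqref{eq:sym_measureN}, and then verify that the resulting bound is saturated by any family of non-degenerate MUB observables. The argument is essentially a counting argument on top of the pairwise bound already established.

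First, I would write out the definition
\begin{equation}
\cQ_F(A_1,\dots,A_N) = \frac{1}{N^2}\sum_{i,j=1}^{N} \cQ_F(A_i\rightarrow A_j), \nonumber
\end{equation}
and recall that the diagonal terms vanish by convention. That leaves $N(N-1)$ off-diagonal terms, each of which is a directional incompatibility $\cQ_F(A_i\rightarrow A_j)$ between two observables on $\cH_d$. Theorem~\ref{thm:Q2bound} gives $\cQ_F(A_i\rightarrow A_j)\le 1-1/d$ for every such pair. Substituting this into the sum yields
\begin{equation}
\cQ_F(A_1,\dots,A_N)\le \frac{N(N-1)}{N^2}\left(1-\frac{1}{d}\right) = \left(1-\frac{1}{N}\right)\left(1-\frac{1}{d}\right), \nonumber
\end{equation}
which is exactly the claimed upper bound.

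For the saturation statement, I would use the second half of Theorem~\ref{thm:Q2bound}: whenever $A_i$ and $A_j$ are non-degenerate observables associated with mutually unbiased bases, we have $\cQ_F(A_i\rightarrow A_j)=1-1/d$, and this holds for both directions $i\to j$ and $j\to i$ by the symmetry of the MUB condition. Hence if the family $\{A_1,\dots,A_N\}$ consists of mutually unbiased non-degenerate observables, every one of the $N(N-1)$ off-diagonal terms attains its maximum value $1-1/d$, and the bound is saturated.

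The only substantive caveat, rather than a technical obstacle, is the existence of such families: one should note that sets of $N$ mutually unbiased bases exist in $\cH_d$ at least for $N\le d+1$ when $d$ is a prime power, so the bound is genuinely tight in such dimensions. No further inequality work is needed beyond invoking Theorem~\ref{thm:Q2bound}; the entire content of the corollary is the combinatorial factor $N(N-1)/N^2$ arising from averaging the pairwise bound over an $N$-element family.
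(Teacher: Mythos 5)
Your proposal is correct and follows exactly the same route as the paper: average the pairwise bound of Theorem~\ref{thm:Q2bound} over the $N(N-1)$ off-diagonal terms in Eq.~\eqref{eq:sym_measureN} to obtain the factor $N(N-1)/N^2 = 1-1/N$, with saturation for families of non-degenerate mutually unbiased observables. Your added remark on the existence of such families (e.g.\ up to $d+1$ MUBs in prime-power dimensions) is a sensible clarification that the paper leaves implicit, but it does not change the argument.
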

\noindent{\it Proof.}
The result simply follows from the definition of $\cQ_{F}(A_{1}, A_{2}, \ldots, A_{N})$, and, the upper bound on the incompatibility of each pair of observables in the set $\{A_{1}, A_{2}, \ldots, A_{N}\}$.
\begin{eqnarray}
\cQ_{F}(A_{1}, A_{2}, \ldots, A_{N}) &=& \frac{1}{N^{2}}\sum_{i,j}\cQ_{F}(A_{i}\rightarrow A_{j}) \nonumber \\
&\leq& \frac{N(N-1)}{N^{2}}\left(1-\frac{1}{d}\right) \nonumber \\
&=&  \left(1-\frac{1}{N}\right)\left(1-\frac{1}{d}\right). \; \blacksquare \nonumber
\end{eqnarray}

Note that, both the lower and upper bounds on the mutual incompatibility for a set of $N$ observables obtained here are the same as those obtained for the incompatibility measure $\cQ$ defined in Eq.~\eqref{eq:acc_fid}, as shown in~\cite{incompatibility_BM}. However, as we will see in Sec.~\ref{sec:dc_subspace1} below, the measures $\cQ$ and $\cQ_{F}$ do yield different values for specific pairs of observables.

\subsection{$\cQ_{1}$ and $\cQ_{\infty}$ for a pair of MUBs}

For comparison, we also evaluate the measures $\cQ_{1}(A\rightarrow B)$ and $\cQ_{\infty}(A\rightarrow B)$ for a pair of mutually unbiased observables: $A\sim \{|a_{i}\rangle\langle a_{i}|\}$ and $B\sim\{|b_{j}\rangle\langle b_{j}|\}$. The probability distribution ${\rm Pr_{\rho}^{A\rightarrow B}}: \{q_{\rho}^{A\rightarrow B}(j)\}$, over the outcomes of a $B$ measurement when it follows a measurement of $A$ on the same state $\rho$ is now given by,
\begin{equation}
q_{\rho}^{A\rightarrow B}(j)  = \sum_{i}\langle a_{i}|\rho|a_{i}\rangle |\langle a_{i}|b_{j}\rangle|^{2}  = \frac{1}{d}. \nonumber
\end{equation}
The $\cQ_{1}$ incompatibility measure is therefore given by,
\begin{eqnarray}
 \cQ_{1}(A\rightarrow B) &=& \sup_{\rho}\frac{1}{2} \sum_{j=1}^{d}\vert q_{\rho}^{A\rightarrow B}(j) - p_{\rho}^{B}(j) \vert \nonumber \\
 &=& \sup_{\rho}\frac{1}{2}\sum_{j=1}^{d}\left\vert \frac{1}{d} - \langle b_{j}|\rho|b_{j}\rangle\right\vert \nonumber \\
 &=& 1 - \frac{1}{d}.
\end{eqnarray}
The final step simply follows from the upper bound on the distance between any other $d$-dimensional probability distribution (in this case,  $\{p_{\rho}^{B}(j)\}$) and the uniform distribution. The bound is indeed attained when $\{p_{\rho}^{B}(j) \}$ is a delta distribution, namely,
\[p_{\rho}^{B}(j) = 0 , \; \forall j\neq j_{0} ; \quad p_{\rho}^{B}(j=j_{0})  = 1 , \]
for some $j_{0} \in [1,d]$. The state $\rho$ that induces this distribution is simply a basis state of $B$, that is, $\rho = |b_{j_{0}}\rangle\langle b_{j_{0}}|$, $j_{0} \in [1,d]$. For a pair of mutually unbiased bases the measure $\cQ_{1}$ is indeed symmetric, so that,
\[\cQ_{1}(A,B) = \frac{1}{2}\left(1 - \frac{1}{d}\right).\]

A similar calculation yields,
\[\cQ_{\infty}(A,B) = \frac{1}{2}\left(1 - \frac{1}{d}\right),\]
for a pair of mutually unbiased observables $A$ and $B$ in a $d$-dimensional space.

The above observations lead us to conjecture that both the $\cQ_{1}$ and the $\cQ_{\infty}$ measures are also bounded above by $\frac{1}{2}(1-\frac{1}{d})$, for {\it any} pair of observables in a $d$-dimensional space. 

\subsection{Observables that commute on a subspace} \label{sec:dc_subspace1}

Finally, we present a simple scenario where our approach to quantifying incompatibility goes significantly beyond the standard entropic uncertainty relations formalism. Consider a pair of non-degenerate observables $A, B$ that commute over a subspace of dimension $d_{c}$.  Specifically, we assume that they share $d_{c}$ common eigenvectors, and are mutually unbiased in the $(d-d_{c})$ dimensional subspace where they do not commute.
\begin{eqnarray}
|a_{i}\rangle &=& |b_{i}\rangle, \; \forall \; i=1,\ldots, d_{c} \nonumber \\
|\langle a_{i}| b_{j} \rangle| &=&  \left\{ \begin{array}{ll} 0 &  {\rm for} \; i \leq d_{c}, j > d_{c} \\
0 &  {\rm for} \; i > d_{c}, j \leq d_{c} \\
\frac{1}{\sqrt{d-d_{c}}} & {\rm for} \; i,j > d_{c}
\end{array} \right\} \label{eq:commute_dc}
\end{eqnarray}
For such a pair of observables, our formalism allows us to derive an expression for their mutual incompatibility in terms of the dimension $d_{c}$ of the commuting subspace.
\begin{theorem}\label{thm:QF_dc}
 Consider a pair of non-degenerate observables $A$ and $B$ in $\cH_{d}$ which are such that they have $d_{c}$ common eigenvectors, and their remaining eigenvectors are mutually unbiased (as in Eq.~\eqref{eq:commute_dc}). The mutual incompatibility of such a pair $A$ and $B$ is given by,
\begin{equation}
\cQ_{F}(A, B) = \frac{1}{2}\left(1 - \frac{1}{d-d_{c}}\right). \label{eq:QF_dc}
\end{equation}
\end{theorem}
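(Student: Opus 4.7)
The plan is to compute $\cQ_F(A\to B)$ directly by exploiting the block structure induced by the decomposition $\cH_d = \cH_c \oplus \cH_{nc}$, where $\cH_c$ is the $d_c$-dimensional common eigenspace and $\cH_{nc}$ is the $(d-d_c)$-dimensional subspace on which $A$ and $B$ are mutually unbiased. Since the projectors $\{P^A_i\}_{i\le d_c}$ and $\{P^B_j\}_{j\le d_c}$ coincide and both commute with the projector $P_c$ onto $\cH_c$, the total weight $p = \tr[(I-P_c)\rho]$ that $\rho$ puts on the non-commuting subspace can be read off either from the $A$-distribution or the $B$-distribution.

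First I would compute, for an arbitrary state $\rho$, the two distributions appearing in $\cQ_F(A\to B)$. Using Eq.~\eqref{eq:commute_dc}, a direct substitution gives $q^{A\to B}_\rho(j) = \langle b_j|\rho|b_j\rangle = p^B_\rho(j)$ for $j \le d_c$, while for $j > d_c$,
\begin{equation}
q^{A\to B}_\rho(j) = \sum_{i>d_c}\frac{1}{d-d_c}\langle a_i|\rho|a_i\rangle = \frac{p}{d-d_c}.\nonumber
\end{equation}
Hence the fidelity splits into a commuting contribution (which is simply $1-p$) plus a non-commuting contribution:
\begin{equation}
F(\mathrm{Pr}^{A\to B}_\rho,\mathrm{Pr}^B_\rho) = (1-p) + \sqrt{\tfrac{p}{d-d_c}}\sum_{j>d_c}\sqrt{\langle b_j|\rho|b_j\rangle}.\nonumber
\end{equation}

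Next I would minimize this expression. For fixed $p$, the sum $\sum_{j>d_c}\sqrt{\langle b_j|\rho|b_j\rangle}$ is minimized (by concavity of $\sqrt{\cdot}$) when all the weight $p$ is concentrated on a single index $j_0 > d_c$, giving $\sqrt{p}$. Substituting and simplifying yields $F \ge (1-p) + p/\sqrt{d-d_c}$. This is linear and strictly decreasing in $p$, so its minimum over $p\in[0,1]$ is attained at $p=1$, producing the bound $F \ge 1/\sqrt{d-d_c}$. The bound is saturated by the state $\rho = |b_{j_0}\rangle\langle b_{j_0}|$ for any $j_0 > d_c$, which lies entirely in $\cH_{nc}$ and concentrates all its $B$-mass on one index; a short verification shows that for this state $\mathrm{Pr}^{A\to B}_\rho$ is uniform on $\{d_c+1,\dots,d\}$ and the fidelity is exactly $1/\sqrt{d-d_c}$. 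Therefore $\cQ_F(A\to B) = 1 - 1/(d-d_c)$.

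By the manifest symmetry of the setup in Eq.~\eqref{eq:commute_dc} under $A\leftrightarrow B$, the same argument gives $\cQ_F(B\to A) = 1 - 1/(d-d_c)$, and plugging into the definition \eqref{eq:sym_measure} yields the desired formula. The only subtle step is the optimization over $\rho$: one must confirm that the minimization reduces cleanly to a two-parameter problem in $p$ and the concentration pattern of $B$-weights on $\cH_{nc}$, which works because $q^{A\to B}_\rho(j)$ for $j>d_c$ depends on $\rho$ only through the scalar $p$. Everything else is then a routine single-variable optimization, so no serious obstacle is expected.
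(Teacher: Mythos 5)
Your proof is correct and follows essentially the same route as the paper's: the same split of the fidelity sum at $j=d_c$ into a commuting block contributing $1-p$ and a mutually unbiased block, the same key inequality $\sum_{j}\sqrt{x_j}\ge\sqrt{\sum_j x_j}$, and the same resulting linear lower bound $F\ge(1-p)+p/\sqrt{d-d_c}\ge 1/\sqrt{d-d_c}$. One point in your favour: you correctly identify the saturating state as $|b_{j_0}\rangle$ with $j_0>d_c$, whereas the paper's appendix asserts the bound is attained at eigenstates $|a_i\rangle$, $i>d_c$ --- a slip, since an eigenstate of $A$ is left unchanged by the $A$-measurement and yields fidelity $1$ rather than $1/\sqrt{d-d_c}$.
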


The proof of this theorem is given in Appendix~\ref{sec:dc_subspace}. In the same section, we also evaluate the mutual incompatibility of the same pair of observables $A$ and $B$ as quantified by the measure $\cQ$ defined in~\cite{incompatibility_BM}. We show that,
\begin{equation}\label{eq:Q_dc}
 \cQ(A,B) \leq \frac{1}{2}\left(1 - \frac{d_{c}+1}{d}\right).
\end{equation}

This example thus highlights clearly the difference between the measures $\cQ$ and $\cQ_{F}$. Comparing Eq.~\eqref{eq:QF_dc} and Eq.~\eqref{eq:Q_dc}, we see that $\cQ_{F}$ and $\cQ$ coincide when $d_{c}=0$ ($A$ and $B$ are mutually unbiased observables) and $d_{c}= (d-1)$ ($A$ and $B$ commute), but do take on different values for $0 < d_{c} < d-1$.

\section{Quantifying Disturbance and Incompatibility for General Measurements}\label{sec:incompat_gen}

In this section, we show how the measures of {\it incompatibility} and {\it disturbance} defined in Sec.~\ref{sec:dist_measure} and Sec.~\ref{sec:incompat_disturbance} can be extended beyond the class of projective measurements. A general observable $\cA$ with discrete outcomes is described by a collection of positive operators $\{ 0 \leq A_{i} \leq \Id\}$ that satisfy $\sum_{i}A_{i} = \Id$. The probability of obtaining outcome $i$ when measuring observable $\cA$ in state $\rho$ is given by $\tr[\rho A_{i}]$. In order to define the class of incompatibility measures $\cQ_{\alpha}$, we also need to specify how the state $\rho$ transforms under a measurement of $\cA$.

In standard quantum theory, there is a canonical association (via the so called {\it Von Neumann - L\"uders} collapse postulate) between an observable characterized by a self-adjoint operator with purely discrete spectrum ($A  = \sum_{i}\alpha_{i}P^{A}_{i}$), and an associated projective measurement ($\cE^{A}(\rho) = \sum_{i}P^{A}_{i}\rho P^{A}_{i}$). For more general observables given by positive operator valued (POV) measures $\cA\sim\{A_{i}\}$, there is no such canonical specification; the associated measurement transformation can now be chosen as any {\it instrument} $\Phi^{\cA}$ {\it implementing} the POV measure $\cA$~\cite{Heinosaari_book}.

An instrument $\Phi^{\cA}$ implementing a measurement of $\cA$ is a collection of completely positive linear maps $\Phi^{\cA}_{i}$ such that, the state $\rho$ transforms to $\Phi_{i}^{\cA}(\rho)$ when outcome $i$ is realized. The probability of realizing outcome $i$ is given by $\tr[\Phi_{i}^{\cA}(\rho)] = \tr[\rho A_{i}]$, for all states $\rho$. The overall transformation of state $\rho$ by instrument $\Phi^{\cA}$ is described by a {\it quantum channel}, that is, a completely positive trace-preserving (CPTP) map (also denoted by $\Phi^{\cA}$):
\[\Phi^{\cA}(\rho) = \sum_{i}\Phi^{\cA}_{i}(\rho).\]

The same observable can indeed be implemented by several different instruments. One simple implementation of a measurement of observable $\cA\sim\{A_{i}\}$ is given by the {\it L\"uders instrument} $\Phi_{\cL}^{\cA}$, in which the post-measurement state after a measurement of observable $\cA$ on state $\rho \in \cH_{d}$ is given by
\[\Phi_{\cL}^{\cA}(\rho) = \sum_{i=1}A_{i}^{1/2}\rho A_{i}^{1/2}.\]
We can now extend our measures of incompatibility and disturbance, for general observables $\cA, \cB$ described by discrete POV measures. However, now the measures indeed crucially depend on the choice of associated instruments $\Phi^{\cA}$ and $\Phi^{\cB}$. The probability distribution over the outcomes of $\cB$ in state $\rho$ is given by,
\[{\rm Pr^{\Phi^{\cB}}_{\rho}} : p^{\Phi^{\cB}}_{\rho}(i) =  \tr[\rho B_{i}] . \]
When the measurement of $\cB$ is preceded by a measurement of $\cA$ on the same state $\rho$, the probability distribution is modified as ${\rm Pr^{\Phi^{\cA}\rightarrow\Phi^{\cB}}}_{\rho}$:
\[{\rm Pr_{\rho}^{\Phi^{\cA}\rightarrow\Phi^{\cB}}} : q_{\rho}^{\Phi^{\cA}\rightarrow\Phi^{\cB}}(i) =  \tr\left[\left(\sum_{j}A^{1/2}_{j}\rho A^{1/2}_{j}\right) B_{i}\right].\]

We can define the incompatibility of $\Phi^{\cA}$ with $\Phi^{\cB}$ as
\begin{equation}
\cQ_{\alpha}(\Phi^{\cA} \rightarrow \Phi^{\cB}) = \sup_{\rho}D_{\alpha}({\rm Pr_{\rho}^{\Phi^{\cA}\rightarrow\Phi^{\cB}}}, {\rm Pr^{\Phi^{\cB}}_{\rho}}).
\end{equation}
As before, we have, $0 \leq \cQ_{\alpha} \leq 1$. Similarly, the maximal disturbance due to a measurement of $\Phi^{\cA}$ can be defined as
\begin{equation}
D_{\alpha}^{\rm max}(\Phi^{\cA}) = \sup_{\rho} D_{\alpha}(\Phi^{\cA}(\rho), \rho),
\end{equation}
where once again, $0\leq D_{\alpha}(\Phi^{\cA}) \leq 1$.

Then, using the relation between the classical and quantum distance measures, it is easy to see that Lemma~\ref{lem:incompat_disturbance} also holds for general quantum measurements. In other words, the incompatibility of $\Phi^{\cA}$ with $\Phi^{\cB}$, $\cQ_{\alpha}(\Phi^{\cA}\rightarrow \Phi^{\cB})$ is always bounded from above by the maximum disturbance $D_{\alpha}^{\rm max}(\Phi^{\cA})$ due to the channel $\Phi^{\cA}$:
\begin{equation}
 \cQ_{\alpha}(\Phi^{\cA}\rightarrow\Phi^{\cB}) \leq D_{\alpha}^{\rm max}(\Phi^{\cA}). \label{eq:gen_incompat_dist}
 \end{equation}

We now show that there exists a non-trivial upper bound on the maximal disturbance due to the L\"uders channel corresponding to a discrete POV measure with a finite number of outcomes. This in turn gives us a non-trivial upper bound on the fidelity-based incompatibility measure $\cQ_{F}(\Phi_{\cL}^{\cA}\rightarrow \Phi_{\cL}^{\cB})$ for such a pair of POV measures $\cA$ and $\cB$.

\begin{theorem}
 The incompatibility of a pair of general observables $\cA$ and $\cB$, with finite number of outcomes $N_{A}$ and $N_{B}$ respectively, and corresponding L\"uders channels:
\[\Phi_{\cL}^{\cA}(\rho) = \sum_{i=1}^{N_{A}}A_{i}^{1/2}\rho A_{i}^{1/2}; \; \Phi_{\cL}^{\cB}(\rho) = \sum_{j=1}^{N_{B}}B_{j}^{1/2}\rho B_{j}^{1/2},\]
is bounded by
\begin{equation}
 \cQ_{F}(\Phi_{\cL}^{\cA}\rightarrow\Phi^{\cB}_{\cL}) \leq 1 - \frac{1}{N_{A}}.
\end{equation}
\end{theorem}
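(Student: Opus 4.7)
The plan is to reduce the statement to the maximal-disturbance bound of Eq.~\eqref{eq:gen_incompat_dist} applied to $\alpha = F$, and then to bound the fidelity using operator monotonicity of the square root. Specifically, since $\cQ_F(\Phi_{\cL}^{\cA}\to\Phi_{\cL}^{\cB}) \leq D_F^{\max}(\Phi_{\cL}^{\cA}) = 1 - \inf_\rho F^2(\Phi_{\cL}^{\cA}(\rho),\rho)$, it suffices to show
\[
\inf_\rho F^2(\Phi_{\cL}^{\cA}(\rho), \rho) \;\geq\; \frac{1}{N_A}.
\]

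First I would reduce the optimization to pure states, exactly as in the proof of Theorem~\ref{thm:Q2bound}. Joint concavity of the fidelity gives, for any decomposition $\rho = \sum_k \lambda_k |\psi_k\rangle\langle\psi_k|$, the inequality $F(\Phi_{\cL}^{\cA}(\rho),\rho) \geq \sum_k \lambda_k F(\Phi_{\cL}^{\cA}(|\psi_k\rangle\langle\psi_k|), |\psi_k\rangle\langle\psi_k|)$, so the infimum over mixed states is attained on pure states. For $|\psi\rangle$ pure, since $F^2(\sigma, |\psi\rangle\langle\psi|) = \langle\psi|\sigma|\psi\rangle$, one obtains
\[
F^2(\Phi_{\cL}^{\cA}(|\psi\rangle\langle\psi|),|\psi\rangle\langle\psi|) \;=\; \sum_{i=1}^{N_A} \bigl(\langle\psi|A_i^{1/2}|\psi\rangle\bigr)^2.
\]

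The key step is to lower-bound this sum. Setting $q_i \equiv \langle\psi|A_i^{1/2}|\psi\rangle \geq 0$, the essential observation is that $A_i \leq A_i^{1/2}$ as operators: since $0 \leq A_i \leq \Id$, one has $A_i^{1/2} \leq \Id$, whence $A_i^{1/2} - A_i = A_i^{1/2}(\Id - A_i^{1/2}) \geq 0$. Summing over $i$ then gives $\sum_i q_i \geq \sum_i \langle\psi|A_i|\psi\rangle = 1$, and the Cauchy--Schwarz inequality on the $N_A$ nonnegative numbers $q_i$ yields $\sum_i q_i^2 \geq (\sum_i q_i)^2/N_A \geq 1/N_A$, which closes the argument.

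The main obstacle --- or rather, the single clean observation beyond the projective case --- is identifying the operator inequality $A_i^{1/2} \geq A_i$ valid for effects. This replaces the trivial identity $P_i^{1/2} = P_i$ available for projectors and allows one to transfer the normalization constraint on $\{\langle\psi|A_i|\psi\rangle\}$ over to $\{q_i\}$. Beyond that, the argument mirrors the proof of Theorem~\ref{thm:Q2bound}, with the R\'enyi-entropy bound $2^{-H_2(A;|\psi\rangle\langle\psi|)} \geq 1/d$ replaced by a Cauchy--Schwarz bound in the $N_A$-dimensional outcome space, which is what yields a dimension-independent $1/N_A$ (as opposed to $1/d$) in the final upper bound.
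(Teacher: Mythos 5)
Your proposal is correct and follows essentially the same route as the paper: reduce to the maximal disturbance $D_F^{\max}(\Phi_{\cL}^{\cA})$, restrict to pure states by concavity, and exploit the effect inequality $A_i^{1/2}\geq A_i$ (which the paper uses implicitly in the step $\sum_i\langle\psi|A_i^{1/2}|\psi\rangle^2\geq\sum_i\langle\psi|A_i|\psi\rangle^2$). The only cosmetic difference is that the paper bounds termwise and invokes $2^{-H_2}\geq 1/N_A$ for the induced outcome distribution, whereas you transfer the normalization to the $q_i$ and apply Cauchy--Schwarz directly --- the same inequality in a different order.
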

\begin{proof}
The result follows once we prove an upper bound on the maximal disturbance $D_{F}^{\rm max}(\Phi_{\cL}^{\cA})$.  Note that,
\begin{eqnarray}
&& \inf_{|\psi\rangle\langle\psi|}F^{2}\left(\Phi_{\cL}^{\cA}(|\psi\rangle\langle\psi|), |\psi\rangle\langle\psi|\right) \nonumber \\
&=& \inf_{|\psi\rangle\langle\psi|}\sum_{i=1}^{N_{A}}\langle \psi|A^{1/2}_{i}|\psi\rangle^{2} \nonumber \\
&\geq& \inf_{|\psi\rangle\langle\psi|}\sum_{i=1}^{N_{A}}\langle \psi|A_{i}|\psi\rangle^{2} = \inf_{|\psi\rangle\langle\psi|}\sum_{i=1}^{N_{A}} \left[p^{\Phi_{\cL}^{\cA}}_{|\psi\rangle}(i)\right]^{2} \nonumber \\
&=& \inf_{|\psi\rangle\langle\psi|} 2^{-H_{2}\left({\rm Pr^{\Phi_{\cL}^{\cA}}_{|\psi\rangle}}\right)} \geq \frac{1}{N_{A}}.
\end{eqnarray}
Therefore,
\begin{eqnarray}
D_{F}^{\rm max}(\Phi_{\cL}^{\cA}) &=& 1 - \inf_{\rho}F^{2}(\Phi_{\cL}^{\cA}(\rho), \rho) \nonumber \\
&\leq& 1 - \inf_{|\psi\rangle\langle\psi|}F^{2}(\Phi_{\cL}^{\cA}(|\psi\rangle\langle\psi|), |\psi\rangle\langle\psi|) \nonumber \\
&\leq& 1 - \frac{1}{N_{A}}.
\end{eqnarray}
The upper bound on $\cQ_{F}(\Phi_{\cL}^{\cA}\rightarrow\Phi_{\cL}^{\cB})$ now follows from the extension of Lemma~\ref{lem:incompat_disturbance} given in Eq.~\eqref{eq:gen_incompat_dist}.
\end{proof}

Note that, while we have proved a non-trivial (strictly less than one) upper bound on the maximal disturbance due to a L\"uders instrument with a finite number of outcomes, there does not exist such a non-trivial upper-bound for the maximal disturbance due to more general instruments. For example, consider the instrument $\Phi^{\cZ_{p}}$ corresponding to a $Z$-channel, namely,
\[\Phi^{\cZ_{p}}(\rho) = pZ\rho Z+ (1-p)\rho .  \; (0\leq p\leq 1)\]
It is easy to check that the minimal fidelity $\inf_{\rho}F^{2}(\Phi^{\cZ_{p}}(\rho), \rho) = 1- p$, and therefore the maximal disturbance $D_{F}^{\rm max}(\Phi^{\cZ_{p}}) = p \leq 1$.

\section{Conclusions}\label{sec:concl}

To summarize, we have proposed a novel approach to quantify the mutual incompatibility of quantum observables, in terms of the change caused by a measurement of one observable on the statistics of the outcomes of a subsequent measurement of the other observable. We use a class of distance measures between classical probability distributions to quantify this change in statistics, thus leading to a class of incompatibility measures.

In particular, we take a closer look at one such measure based on the classical fidelity between probability distributions. We obtain a tight, non-trivial (strictly less than one) upper bound for the fidelity-based incompatibility of a pair of projective measurements in finite dimensions, and show that this bound is attained for a pair of mutually unbiased bases. Interestingly, the upper bound derived here coincides with that for a different measure of incompatibility, based on the cryptographic notion of accessible fidelity, proposed recently~\cite{incompatibility_BM}. The formalism presented here is however completely general and extends beyond projective measurements. In particular, we use our measure to obtain a non-trivial bound on the mutual incompatibility of a pair of L\"uders instruments with a finite number of outcomes.

Our analysis here brings to light an elegant quantitative connection between operationally motivated notions of {\it disturbance} and {\it incompatibility} for general quantum measurements. Furthermore, since our class of measures vanish if and only if the observables in question commute, this approach goes beyond uncertainty relations in quantifying incompatibility. Interestingly, even optimal entropic uncertainty relations formulated for the successive measurement scenario yield only a trivial (zero) bound for observables that share a single common eigenvector~\cite{MDS03}.

We note that the class of measures presented here is indeed distinct from the incompatibility measure defined in~\cite{incompatibility_BM} based on the accessible fidelity, though both measures coincide for the limiting cases of commuting and mutually unbiased observables. While the operational setting motivating the new class of measures introduced here is a commonly encountered one in the context of quantum cryptography, it remains to be seen if these measures can play a direct role in analyzing the security of quantum cryptographic protocols. Another interesting line of investigation would be to check  whether the measures defined here can be computed efficiently using convex optimization techniques.

\subsection*{ Acknowledgements} The authors are grateful to Somshubhro Bandyopadhyay for valuable discussions and for his comments on an earlier version of the paper.

\appendix

\section{Incompatibility of observables that commute on a subspace}\label{sec:dc_subspace}

Here we prove Theorem~\ref{thm:QF_dc} and obtain an expression for the mutual incompatibility of the pair of observables $A$ and $B$ described in Eq.~\eqref{eq:commute_dc}.

\noindent{\it Proof:}
We first prove an upper bound on $\cQ_{F}(A\rightarrow B)$, the incompatibility of $A$ with $B$. For the pair of observables defined in Eq.~\eqref{eq:commute_dc}, the fidelity between the two relevant probability distributions is given by,
\begin{eqnarray}
&& F({\rm Pr}^{A\rightarrow B}_{|\psi\rangle\langle\psi|}, {\rm Pr}^{B}_{|\psi\rangle\langle\psi|}) \nonumber \\
&=& \sum_{j}\sqrt{\sum_{i}|\langle a_{i}|\psi\rangle |^{2}|\langle a_{i}|b_{j}\rangle |^{2}|\langle b_{j}|\psi\rangle |^{2}} \nonumber \\
&=& \sum_{j \leq d_{c}} |\langle a_{j}|\psi\rangle |^{2} + \sum_{j>d_{c}}|\langle b_{j}|\psi\rangle|\frac{\sqrt{1-\sum_{i\leq d_{c}}|\langle a_{i}|\psi\rangle |^{2}}}{\sqrt{d-d_{c}}} \nonumber
\end{eqnarray}
Since $\sum_{j > d_{c}}|\langle b_{j}|\psi\rangle | \geq \sqrt{\sum_{j> d_{c}}|\langle b_{j}|\psi\rangle|^{2}}$,  and $|a_{j}\rangle \equiv |b_{j}\rangle$ for $j \leq d_{c}$, we obtain,
\begin{eqnarray}
&& F({\rm Pr}^{A\rightarrow B}_{|\psi\rangle\langle\psi|}, {\rm Pr}^{B}_{|\psi\rangle\langle\psi|}) \nonumber \\
&\geq& \sum_{j \leq d_{c}}|\langle a_{j}|\psi\rangle|^{2} + \frac{1 - \sum_{j\leq d_{c}}|\langle b_{j}|\psi\rangle|^{2}}{\sqrt{d-d_{c}}} \nonumber \\
&=& \frac{\sqrt{d-d_{c}}\sum_{j\leq d_{c}}|\langle b_{j}|\psi\rangle|^{2} + (1-\sum_{j\leq d_{c}}|\langle b_{j}|\psi\rangle|^{2})}{\sqrt{d-d_{c}}} \nonumber \\
&\geq& \frac{1}{\sqrt{d-d_{c}}} .
\end{eqnarray}
The bound on $\cQ_{F}(A\rightarrow B)$ follows immediately:
\begin{equation}
\cQ_{F}(A\rightarrow B) \leq 1 - \frac{1}{d-d_{c}} . \nonumber
\end{equation}
This maximal value is attained for eigenstates of $A$ that span the non-commuting subspace, that is, for states $|\psi\rangle = |a_{i}\rangle \;  (i > d_{c})$, and hence,
\[\cQ_{F}(A\rightarrow B) = 1 - \frac{1}{d-d_{c}}. \]

Similarly, we can show, that the incompatibility of $B$ with $A$ is given by,
\begin{equation}
\cQ_{F}(B \rightarrow A) = 1 - \frac{1}{d-d_{c}}. \nonumber
\end{equation}
Together, we get the desired result on the mutual incompatibility of $A$ and $B$ stated in Eq.~\eqref{eq:QF_dc}.
$\blacksquare$

We now consider the mutual incompatibility of the same pair of observables $A$ and $B$ as quantified by the measure $\cQ$ defined in~\cite{incompatibility_BM}. For a set of $N$ non-degenerate observables $\{A^{(1)}, A^{(2)}, \ldots, A^{(N)}\}$ in a $d$-dimensional Hilbert space $\cH_{d}$ with associated eigenvectors $\{|a^{(i)}_{j}\rangle, j=1,\ldots,d.\} $, let $\cE^{A^{(i)}}(\rho) = \sum_{m=1}^{d}(|a^{(i)}_{j}\rangle\langle a^{(i)}_{j}|)\rho (|a^{(i)}_{j}\rangle\langle a^{(i)}_{j}|)$ denote the post-measurement state associated with a measurement of observable $A^{(i)}$ on state $\rho$. Then, it was shown in~\cite{incompatibility_BM} that $\cQ(A^{(1)}, A^{(2)}, \ldots, A^{(N)})$ maybe defined as the complement of the best possible average fidelity an eavesdropper can obtain in a quantum key distribution (QKD) protocol.

The measure $\cQ$ defined in~\cite{incompatibility_BM} can be evaluated as
\begin{eqnarray}
&& \cQ(A^{(1)}, A^{(2)}, \ldots, A^{(N)}) \label{eq:acc_fid} \\
&=& 1 - \sup_{\{|\xi_{m}\rangle\langle\xi_{m}|\}}\frac{1}{Nd}\sum_{m}\lambda_{\rm max}\left[\left(\sum_{i}\cE^{A^{(i)}}(|\xi_{m}\rangle\langle \xi_{m}|)\right)\right], \nonumber
\end{eqnarray}
where, $\lambda_{\rm max}[\Phi]$ denotes the maximum eigenvalue of $\Phi$. The supremum is taken over all positive operator valued measures (POVMs) comprising (non-normalized) rank-one operators $\{|\xi_{m}\rangle\langle\xi_{m}|\}$ such that $\sum_{m}|\xi_{m}\rangle\langle\xi_{m}| = \Id$. 
Thus, $\cQ(A,B)$ is given by,
\begin{widetext}
\begin{eqnarray}
 \cQ(A, B) &=& 1 -   \sup_{\{|\xi_{m}\rangle\langle\xi_{m}|\}}\frac{1}{2d}\sum_{m}\lambda_{\rm max}\left[\cE^{A}(|\xi_{m}\rangle\langle \xi_{m}|) + \cE^{B}(|\xi_{m}\rangle\langle \xi_{m}|)\right] \nonumber \\
 &=& 1  -  \sup_{\{|\xi_{m}\rangle\langle\xi_{m}|\}}\frac{1}{2d}\sum_{m}\lambda_{\rm max}\left[\sum_{i=1}^{d}|\langle a_{i}|\xi_{m}\rangle|^{2}|a_{i}\rangle\langle a_{i}|  + \sum_{j=1}^{d}|\langle b_{j}|\xi_{m}\rangle|^{2}|b_{j}\rangle\langle b_{j}|\right] \nonumber
\end{eqnarray}
\end{widetext}

While the supremum is to be taken over all POVMs $\{|\xi_{m}\rangle\langle\xi_{m}\}$, choosing $|\xi_{m}\rangle = |b_{m}\rangle, \; \forall m =1,\ldots,d$ provides a lower bound on the second term. For the observables $A$ and $B$ defined in Eq.~\eqref{eq:commute_dc}, this lower bound is easily evaluated:
\begin{widetext}
\begin{eqnarray}
\cQ(A,B) &=& 1  - \sup_{\cM \sim \{|\xi_{m}\rangle\langle\xi_{m}|\}}\frac{1}{2d}\sum_{m}\lambda_{\rm max}\left[\sum_{i=1}^{d}|\langle a_{i}|\xi_{m}\rangle|^{2}|a_{i}\rangle\langle a_{i}|  + \sum_{j=1}^{d}|\langle b_{j}|\xi_{m}\rangle|^{2}|b_{j}\rangle\langle b_{j}|\right]  \nonumber \\
&\leq& 1  - \frac{1}{2d}\sum_{m=1}^{d_{c}}\lambda_{\rm max}\left[2|a_{m}\rangle\langle a_{m}|\right] - \frac{1}{2d}\sum_{m=d_{c}+1}^{d}\lambda_{\rm max}\left[\frac{1}{d-d_{c}}\sum_{i>d_{c}}|a_{i}\rangle\langle a_{i}| + |b_{m}\rangle\langle b_{m}|\right] \nonumber \\
&=& 1 - \frac{2d_{c}}{2d} - \frac{1}{2d}\sum_{m=d_{c}+1}^{d}\max_{k > d_{c}}\left[\frac{1}{d-d_{c}}\sum_{i>d_{c}}|\langle b_{k}|a_{i}\rangle |^{2} + |\langle b_{k}|b_{m}\rangle|^{2}\right] \nonumber \\
&\leq& 1 - \frac{d_{c}}{d} - \frac{1}{2d} \sum_{m>d_{c}}\left(\frac{1}{d-d_{c}} + 1\right) = 1 - \frac{d_{c}}{d} - \frac{d-d_{c}+1}{2d} = \frac{1}{2}\left(1 - \frac{d_{c}+1}{d}\right). \label{eq:acc_fid_dc}
\end{eqnarray}
\end{widetext}

\section{Asymmetry of $\cQ_{F}(A\rightarrow B)$}\label{sec:asymmetry}

As we will see below, the asymmetry of the incompatibility measure for a pair of observables manifests clearly when one of the observables has a degenerate spectrum. We first obtain an expression for the maximal disturbance due to a measurement of an observable with a degenerate spectrum.

Consider the observable $C$ with a spectral decomposition
\[ C = \sum_{i=1}^{r}\alpha_{i} P^{C}_{i}, \]
where, each $P^{C}_{i}$ is a projector on to a $d_{i}$-dimensional subspace $(0 < d_{i} < d)$ with $\sum_{i=1}^{r}d_{i} = d$. Further, let $\{|c_{i}\rangle\}$ be an orthonormal basis of eigenvectors of $C$, such that,
\[ P^{C}_{i} = \sum_{k = d_{1} + \ldots + d_{i-1}+1}^{d_{1}+\ldots d_{i-1} + d_{i}} |c_{k}\rangle\langle c_{k}| . \]
For such an observable $C$, we evaluate the disturbance $D_{F}^{\rm max}(C)$ defined in Eq.~\eqref{eq:disturbance}.

First, note that for any state $|\psi\rangle$, the fidelity with the post-measurement state is bounded by
\begin{eqnarray}
 F^{2}[\cE^{C}(|\psi\rangle\langle\psi|), |\psi\rangle\langle\psi|] &=& \sum_{i=1}^{r}\langle \psi|P^{C}_{i}|\psi\rangle^{2} \nonumber \\
 &=& 2^{-H_{2}(C||\psi\rangle)} \geq \frac{1}{r}, \label{eq:disturbance_C}
\end{eqnarray}
where $H_{2}(C||\psi\rangle)$ is the $H_{2}$-entropy of the probability distribution resulting from a measurement of observable $C$ on state $|\psi\rangle$. The minimum value in Eq.~\eqref{eq:disturbance_C} is attained for any state which satisfies $\langle\psi_{\rm opt}|P^{C}_{i}|\psi_{\rm opt}\rangle = \frac{1}{r}$ for all $i=1,\ldots, r$. For example, the state
\[|\psi_{\rm opt}\rangle = \frac{1}{\sqrt{r}} \left[ |c_{1}\rangle + |c_{d_{1}+1}\rangle + \ldots + |c_{d-d_{r}+1}\rangle \right], \]
yields this lower bound. The maximum disturbance due to a measurement of observable $C$ is therefore given by
\begin{equation}
 D_{F}^{\rm max}(C) = 1 - \frac{1}{r} . \nonumber
\end{equation}

We now present an example of a pair of observables $A$ and $B$ such that $\cQ_{F}(A\rightarrow B) \neq \cQ_{F} (B \rightarrow A)$. We choose $A$ to have a totally non-degenerate spectrum in a $d$-dimensional Hilbert space:
\[ A = \sum_{i=1}^{d} a_{i}|a_{i}\rangle\langle a_{i}|,\]
but choose $B$ to be an observable with a degenerate spectrum. Specifically, let $\{|b_{j}\rangle, j=1,\ldots, d\}$ be a basis that is mutually unbiased with respect to $\{|a_{i}\rangle\}$. Choose observable $B$ to have a spectral decomposition
\[ B  = \beta_{1}P^{B}_{1} + \beta_{2}P^{B}_{2}, \]
where, $P^{B}_{1} = \sum_{j=1}^{m}|b_{j}\rangle\langle b_{j}|$, $(0<m < d)$.

As shown above, the maximal disturbance due to a measurement of $B$ is then given by $D_{F}^{\rm max}(B) = \frac{1}{2}$. Hence,
\[ \cQ_{F} (B \rightarrow A) \leq D_{F}^{\rm max} (B) = \frac{1}{2}, \] for all $A$.

To evaluate $Q_{F}(A\rightarrow B)$, note that
\begin{eqnarray}
&& F^{2}[{\rm Pr}^{A\rightarrow B}_{\psi}, {\rm Pr}^{B}_{\psi}] \nonumber \\
&=& \left[ \sum_{k=1}^{2}\left(\sum_{i}|\langle a_{i}|\psi\rangle|^{2}\langle a_{i}|P^{B}_{k}|a_{i}\rangle\right)^{1/2} \langle\psi|P_{k}^{B}|\psi\rangle^{1/2}\right]^{2} . \nonumber
\end{eqnarray}
Choosing $|\psi\rangle = |b_{j}\rangle$ for some $j=1,\ldots,m$, we have, $\langle a_{i}|\psi\rangle = 1/\sqrt{d}$, and,
\[\langle \psi| P^{B}_{2} |\psi\rangle = 0; \qquad \langle a_{i}|P_{1}^{B}|a_{i}\rangle = \frac{m}{d}, \; \forall i=1,\ldots, d.\]
Therefore, for $|\psi\rangle = |b_{j}\rangle$,
\begin{eqnarray}
&& F^{2}[{\rm Pr}^{A\rightarrow B}_{\psi}, {\rm Pr}^{B}_{\psi}] \nonumber \\
&=& \sum_{i}|\langle a_{i}|\psi\rangle |^{2}\langle a_{i}|P_{1}^{B}|a_{i}\rangle = \frac{m}{d}. \nonumber
\end{eqnarray}

Choosing $m < \frac{d}{2}$, we get,
\[\inf_{|\psi\rangle\langle\psi|} F^{2}[{\rm Pr}^{A\rightarrow B}_{\psi}, {\rm Pr}^{B}_{\psi}] < \frac{1}{2}.\]
Hence,
\begin{equation}
 \cQ_{F}(A\rightarrow B) = \sup_{|\psi\rangle\langle\psi|} (1 - F^{2}[{\rm Pr}^{A\rightarrow B}_{\psi}, {\rm Pr}^{B}_{\psi}]) > \frac{1}{2}.
\end{equation}
Thus, we have a pair of observables $A,B$ such that, $\cQ_{F}(B \rightarrow A) \neq \cQ_{F}(A \rightarrow B)$.


\end{document}